\theoremstyle{plain}
\newtheorem{theorem}{Theorem}
\newtheorem{corollary}{Corollary}
\newtheorem{lemma}{Lemma}
\newtheorem{definition}{Definition}
\newcommand{\In}{{\rm In}}
\newcommand{\Out}{{\rm Out}}
\newcommand{\cS}{\mathcal{S}}
\title{Towards an Operational Definition of Group Network Codes}
\author{\IEEEauthorblockN{Fei Wei}
\IEEEauthorblockA{University at Buffalo\\
feiwei@buffalo.edu}
\and
\IEEEauthorblockN{Michael Langberg}
\IEEEauthorblockA{University at Buffalo \\
mikel@buffalo.edu}
\and
\IEEEauthorblockN{Michelle Effros}
\IEEEauthorblockA{California Institute of Technology\\
effros@caltech.edu}
}
\begin{document}

\maketitle


\begin{abstract}

Group network codes are a generalization of linear codes that have seen several studies over the last decade. 
When studying network codes, operations performed at internal network nodes called local encoding functions, are of significant interest.
While local encoding functions of linear codes are well understood (and of operational significance), no similar operational definition exists for group network codes. 
To bridge this gap, we study the connections between group network codes and a family of codes called Coordinate-Wise-Linear (CWL) codes. 
CWL codes generalize linear codes and, in addition, can be defined locally (i.e., operationally). 
In this work, we study the connection between CWL codes and group codes from both a local and global encoding perspective. 
We show that Abelian group codes can be expressed as CWL codes and, as a result, they inherit an operational definition.

\end{abstract}


\section{Introduction}\label{sec:intro}

Network coding is a well studied communication paradigm on noiseless networks that enables network nodes to encode information before subsequent transmissions, e.g., \cite{koetter2003algebraic,li2003linear,effros2003linear,jaggi2004linear,yeung2008information}.
In the network coding literature, it is common to distinguish between {\em local} and {\em global} encoding functions.
A local encoding function $\phi_{le}$ for network edge $e=(u,v)$ determines the information transmitted on $e$ as a function of the incoming information to the tail node $u$ of $e$. 
A global encoding function $\phi_{ge}$ for edge $e$ determines the information transmitted on $e$ as a function of the network source random variables $(X_i:i \in \cS)$.
(Detailed definitions for the concepts above and those that appear below appear in Section~\ref{sec:model}.)
Local encoding functions capture the {\em operational} aspect of network coding, in the sense that they characterize the distributed encoding process performed locally at network nodes.
Global encoding functions capture how source information is processed throughout the network, in the sense that they explicitly tie the information transmitted on network edges with the information present at network sources.
In the context of acyclic networks, given a collection of local encoding functions one can inductively derive the corresponding global encoding functions, e.g., \cite{yeung2008information}.

\begin{figure*}[t!]
    \centering
    \captionsetup{justification=centering}
    \includegraphics[draft=false,width=0.8\textwidth]{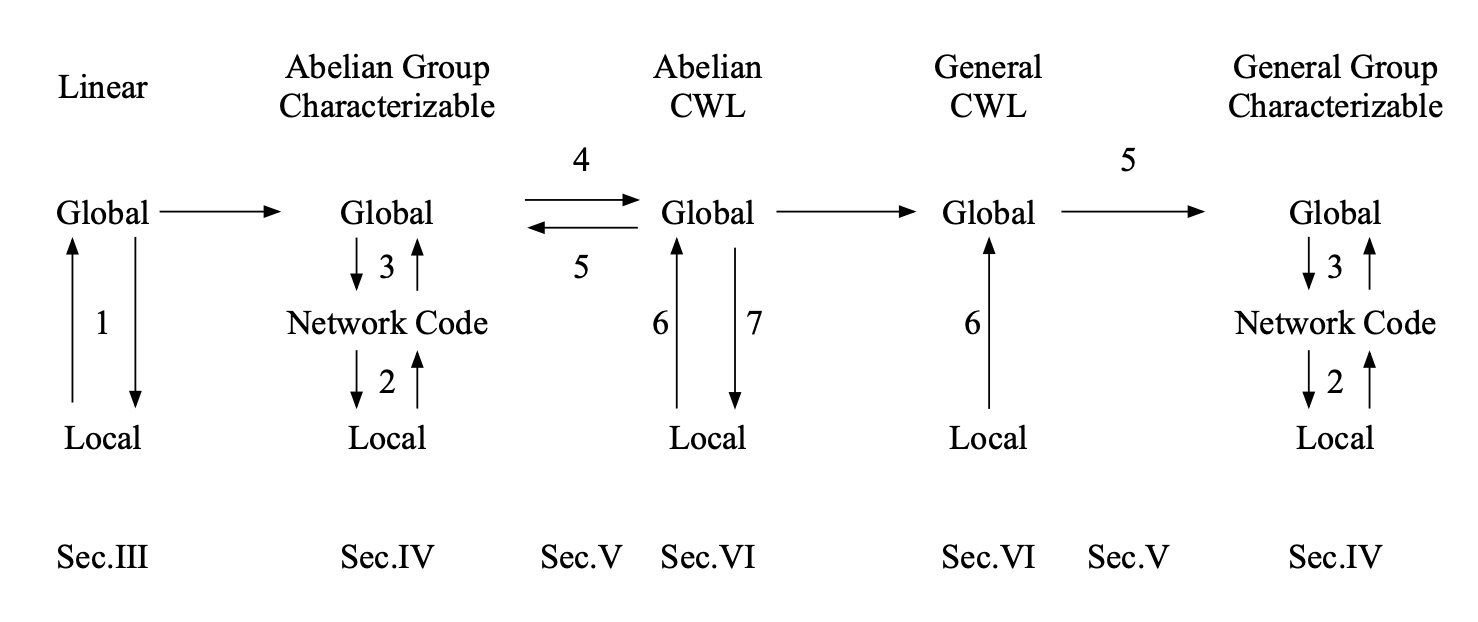}
    \caption{An outline of the relations among different types of functions. The number marked on each implication represents the corresponding theorem in this work proving the implication (with corresponding sections below).} 
    \label{fig:functions}
\end{figure*}

{\em Linear codes} are an efficient and widely used method for the encoding and decoding of information.
In the context of network coding, linear encoding has been extensively studied, e.g., \cite{koetter2003algebraic,li2003linear,effros2003linear,jaggi2004linear,ho2006random,yeung2008information,ho2010equivalence}.
Operationally, local encoding functions that linearly combine the incoming messages to a given edge yield efficient communication schemes for use in practice, e.g., \cite{yeung2007avalanche,dimakis2010network,chaudhuri2019secure}.
Local encoding functions that are linear give way to linear global encoding functions, implying that terminals receive linear combinations of the source random variables, a fact found very useful in the analysis of network coding schemes.
Although linear codes suffice to obtain the multicast capacity \cite{li2003linear}, for general network coding instances, with multiple sources and multiple terminals, linear codes fall short of achieving capacity \cite{dougherty2005insufficiency}.

{\em Group network codes}, first defined in \cite{chan2005optimality}, are a generalization of linear codes.
Roughly speaking, in linear codes edge messages are characterized by linear subspaces of the source vector space, while in group codes both source messages and edge messages are characterized by certain co-sets of subgroups of a given ambient group $G$. 
Group network codes do not suffer from the sub-optimality of linear codes, as any achievable network coding rate vector can be {\em approximated} by a group code \cite{chan2002relation,chan2005optimality}.
While linear codes may be defined locally, group codes lack such an operational definition.
{\em In this work, we seek an operational definition for group network codes - one that will broaden our understanding of group codes and potentially allow the design of low complexity local encoding functions.}

Towards this end, we study a family of codes called Component-Wise-Linear (CWL) codes \cite{wei2017effect}, which, as group codes, generalize linear codes, albeit from an operational perspective. 
A local encoding function $\phi_{le}$ for edge $e$ is CWL if one can associate a group structure with each incoming edge to $e$ and with the edge $e$ itself such that the mapping expressed by $\phi_{le}$ is a homomorphism. 
One can similarly define global CWL functions. 
Linear codes are shown to be CWL by choosing the corresponding groups to again be subspaces of the source vector space.
Further, the network codes defined by (global) CWL functions are group codes \cite{wei2019locala},\cite{wei2019local}.
{\em In this work we address the complementary question asking whether group codes can be represented operationally through CWL functions.}

The results of this work are summarized in Figure~\ref{fig:functions}.
For linear, CWL, and group codes, we study the notion of both local and global functions, some of which have not been explicitly defined before.
We compare between the local and global variants of linear, CWL, and group encoding functions and analyze their relation.

Our work is structured as follows.
In Section~\ref{sec:model}, we present our model and the definitions of linear, CWL, and group codes.
In Section~\ref{sec:linear}, we study the relationship between local and global encoding in the context of linear codes. 
The results presented in Section~\ref{sec:linear} are folklore and given here for completeness. 
In Section~\ref{sec:group}, we study group network codes.
We define a notion of local and global encoding and study the relationship between them.
In Section~\ref{sec:groupandcwl}, we study the relationship between group network codes and CWL codes.
We distinguish between Abelian and non-Abelian group structures.
In Section~\ref{sec:cwl}, we investigate the relationship between locally and globally defined CWL codes.
Finally, we conclude in Section~\ref{sec:conclusion}. 

One of the main consequences of our analysis lies in the combination of Theorems~\ref{theorem:group2}, \ref{theorem:groupCWL4}, and \ref{theorem:CWL7} (see Figure~\ref{fig:functions}), which collectively show that Abelian group codes can be represented operationally by Abelian CWL codes, and thus the former inherit the operational aspects of the latter (see Corollary~\ref{col:only}).

The proofs of several claims appear in the Appendix.


\section{Model and definition} \label{sec:model}

We denote the set $\{1,\dots,k\}$ by $[k]$ for any positive integer $k$. 
Given a random variable $X$, we use the calligraphic letter $\mathcal{X}$ to represent its alphabet and use lower case $x$ to represent a realization of $X$.
Given an index set $\alpha$, $X_\alpha$ is the collection of random variables $(X_a: a\in\alpha)$ with support $\mathcal{X}_\alpha = \prod_{a\in\alpha} \mathcal{X}_a$ equal to the Cartesian product of $\{\mathcal{X}_a: a\in\alpha\}$.
For a singleton set, we may omit brackets, for example writing $a$ in place of $\{a\}$.


\subsection{Network Instance}

A network instance $\mathcal{I} = (\mathcal{N},\mathcal{S},\mathcal{T},\mathcal{M})$ includes a directed acyclic error-free network $\mathcal{N} = G(\mathcal{V,E})$ with nodes (also referred to as vertices) $\mathcal{V}$ and edges $\mathcal{E} \subset \mathcal{V}\times\mathcal{V}$, a set of sources $\mathcal{S}\subset \mathcal{V}$, a set of terminals $\mathcal{T}\subset \mathcal{V}$ and a demand matrix $\mathcal{M}$, where $m_{st} = 1$ if and only if terminal $t \in \mathcal{T}$ demands source $s \in \mathcal{S}$.
Each edge $e=(u,v)\in\mathcal{E}$ represents an error-free point-to-point link from node $u$ to node $v$ with edge capacity $R_e > 0$.
For each node $v\in\mathcal{V}$, we denote the set of incoming and outgoing edges of node $v$ as $ \In(v) = \{(v_1,v):(v_1,v)\in\mathcal{E}\}$ and $\Out(v) = \{(v,v_1):(v,v_1)\in\mathcal{E}\}$, respectively. 
Without loss of generality, we assume that there are no incoming edges for any source $s\in\mathcal{S}$ and no outgoing edges for any terminal $t\in\mathcal{T}$, giving, $\mathcal{S}\cap\mathcal{T}=\phi$.


\subsection{Network Code}

Let $\mathcal{I}$ be a network instance.
A \textit{network code} of block length $n$ on $\mathcal{I}$ is defined by a set of random variables $\{X_f : f\in \mathcal{S}\cup\mathcal{E}\}$ as follows. 
Each source $i\in\mathcal{S}$ with rate $R_i$ independently generates source message $X_i$ uniformly at random over the alphabet $\mathcal{X}_i = [2^{nR_i}]$.
Each edge $e \in \mathcal{E}$ carries edge message $X_e$ with alphabet $\mathcal{X}_e = [2^{n R_e}]$. 

For any edge $e=(u,v)\in \mathcal{E}$, random variable $X_e$ is determined by the incoming random variables $X_{\In(u)}$.
Namely, with each edge we can associate a \textit{local encoding function} $\phi_{le}: \mathcal{X}_{\In(u)} \mapsto \mathcal{X}_e$ that takes as its input the message tuple $X_{\In(u)}$ of random variables associated with incoming edges $\In(u)$.
The edge message $X_e$ equals the evaluation of $\phi_{le}$ on its input, giving $X_e=\phi_{le}(X_{\In(u)})$.
Edges that leave source node $s\in\mathcal{S}$ have corresponding local encoding functions that take the source information $X_s$ as input.
At any terminal node $t\in\mathcal{T}$, the decoding function $\phi_{t}:\mathcal{X}_{\In(t)} \mapsto \hat{\mathcal{X}}_t$ takes as input incoming messages $X_{\In(t)}$ and emits the reproduction $\hat{X}_t$ of the demanded source message $X_t$.
Decoding is considered successful if $\hat{X}_t=X_t$.

As for each edge $e=(u,v)$ $X_e$ is a function of $X_{\In(u)}$, we inductively obtain that $X_e$ is a function of the source message tuple $X_\mathcal{S}$ as well.
We can thus associate with each edge $e$ a \textit{global encoding function} $\phi_{ge}: \mathcal{X}_{\mathcal{S}} \mapsto \mathcal{X}_e,$ giving $X_e=\phi_{le}(X_{\In(u)})=\phi_{ge}(X_\mathcal{S})$.


\subsection{Linear Functions}

\begin{definition}[Linear Function]\label{def:linearFunc}
A surjective function $\phi : \mathcal{X}_a \mapsto \mathcal{X}_b$ is called linear if and only if there exists a vector space $V$ over a finite base field $\mathbb{F}$ with subspaces $V_a$ and $V_b$, and a matrix $T$ where $\mathcal{X}_a = V_a$ and $\mathcal{X}_b = V_b$, such that $X_b= \phi(X_a)=X_a T$.
\end{definition}


\subsection{Group Characterizable Random Variables}


\begin{definition}[Group Characterizable Random Variables] \label{def:groupRV}
Random variables $\{X_a:a\in\mathcal{A}\}$ are group characterizable if and only if there exists a finite group $G$ with subgroups $\{G_a:a\in\mathcal{A}\}$ such that given an element $g_r$ chosen uniformly at random from $G$ (referred to as the ``uniform element $g_r$''), it holds for all $a \in \mathcal{A}$ that $X_a = g_r G_a$.
The alphabet $\mathcal{X}_a$ of $X_a$, equals the set of left cosets of $G_a$ in $G$.
The group $G$ and subgroups $\{G_a:a\in\mathcal{A}\}$ are called a group characterization of $\{X_a:a\in\mathcal{A}\}$.
\end{definition}

Group characterizable random variables were introduced in \cite{chan2002relation}.
For any $\alpha\subseteq \mathcal{A}$, if $\{X_a:a\in\mathcal{A}\}$ are group characterizable, then by our definitions, the following properties hold for $X_\alpha$:
\begin{enumerate}
    \item $X_\alpha$ is quasi-uniform over $\mathcal{X}_\alpha$ and $support(X_\alpha)=\{(g G_a: a\in\alpha): g\in G\}$.
    \item Let $G_\alpha=\cap_{a\in\alpha} G_a$. 
    By our definitions, $X_\alpha = (g_r G_1,\dots,g_r G_{|\alpha|})$ where $g_r$ is the uniform element in $G$.
    As $\cap_{a\in\alpha} (g_r G_a) = g_r (\cap_{a\in\alpha} G_a) = g_r G_\alpha$, $X_\alpha= g_r G_\alpha$.
    Namely, $X_\alpha$ is equivalent to the random variable distributed uniformly over $G/G_\alpha$.
    \item By the properties above, $H(X_\alpha)=\log \frac{|G|}{|G_\alpha|}$.
\end{enumerate}

\begin{definition}[Group Characterizable Function] \label{def:groupFunc}
Consider a set of random variables $\{X_a: a\in\mathcal{A}\}$. 
For any $\alpha\subset \mathcal{A}$ and $b\in \mathcal{A}$, a surjective function $\phi: X_\alpha \mapsto X_b$ is called group characterizable if and only if there exists a finite group $G$ and subgroups $\{G_f:f\in \alpha \cup b\}$ which are a group characterization of random variables $\{X_f: f\in\alpha\cup b\}$.
\end{definition}

Notice that if $\phi: X_\alpha \mapsto X_b$ is a group characterizable function with representation $\{G_f: f\in \alpha\cup b\}$ then $G_\alpha\subset G_b$.
This follows from the fact that $\log\frac{|G|}{|G_{\alpha \cup b}|}= H(X_{\alpha}, X_b) =H(X_\alpha)=\log\frac{|G|}{|G_{\alpha}|}$.

By the definition above, for any edge $e=(u,v)\in\mathcal{E}$, a local encoding function $\phi_{le}$ is group characterizable if and only if random variables $\{X_f:f\in {\In(u)}\cup e\}$ are group characterizable.
Similarly, a global encoding function $\phi_{ge}$ is group characterizable if and only if random variables $\{X_f: f\in\mathcal{S}\cup e\}$ are group characterizable.

\begin{definition}[Group Network Code] \label{def:groupNC}
A network code $\{X_f: f\in\mathcal{S}\cup\mathcal{E}\}$ is called a group network code if and only if there exists a finite group $G$ and subgroups $\{G_f:f\in\mathcal{S}\cup\mathcal{E}\}$ that characterize $\{X_f: f\in\mathcal{S}\cup\mathcal{E}\}$.
A group network code is called Abelian if $G$ is Abelian.
\end{definition}

In some parts of our discussion, we consider encoding functions that are as group characterizable in a ``consistent manner'', as defined next. 

\begin{definition}[Consistent Group Characterization]  \label{def:groupConsis}
Let $\{\phi_f:f\in\mathcal{F}\}$ be a collection of group characterizable functions over $\{X_a: a\in \mathcal{A}\}$.
For each function $\phi_f$, we denote the index set of the input and output random variables of $\phi_f$ by $\mathcal{A}_f\subset \mathcal{A}$. 
(For example, if $\phi_{f}: \mathcal{X}_{\alpha_f} \mapsto \mathcal{X}_{b_f}$, $\mathcal{A}_{f}=\alpha_f \cup b_f$.)
We say that functions $\{\phi_f:f\in\mathcal{F}\}$ have a consistent group characterization if and only if there exists a finite group $G$ and subgroups $\{G_i: i\in \cup_{f\in\mathcal{F}} \mathcal{A}_f\}$ such that for each $f\in\mathcal{F}$, $\phi_f$ is a group characterizable by $G$ and $\{G_i:i\in \mathcal{A}_f\}$.
\end{definition}

By Definition~\ref{def:groupConsis}, (i) each encoding function in the collection is group characterizable within a common group $G$, and (ii) the group characterizations of any two functions in the collection that involve the same random variable $X_f$ use the same subgroup $G_f$. 
Group characterizations provide us a way to observe the dependency among random variables, and the consistency described here (and below) serves as a tool to allow the comparisons studied in this work between different families of coding functions.


\subsection{Coordinate-Wise-Linear (CWL) Functions}

\begin{definition}[Coordinate-Wise-Linear Function]
A surjective function $\phi: \mathcal{X}_\alpha \mapsto \mathcal{X}_b$ is called coordinate-wise-linear (CWL) if and only if there exist finite groups $\{H_f: f\in \alpha \cup b\}$, with group operation $\stackrel{f}{\circ}$ defined on $H_f$, where $H_f= \mathcal{X}_f$, such that $\phi$ is a group homomorphism from $\prod_{a\in\alpha} H_a$ to $H_{b}$. 
Namely, for any $(x_1,\ldots,x_{|\alpha|}),(x_1',\ldots,x_{|\alpha|}')\in \mathcal{X}_\alpha$ it holds that
\begin{equation*}
\begin{aligned}
    &\phi(x_1\stackrel{1}{\circ} x_1', \ldots, x_{|\alpha|} \stackrel{|\alpha|}{\circ} x_{|\alpha|}')\\ =&\phi(x_1, \dots, x_{|\alpha|}) \stackrel{b}{\circ}  \phi(x_1', \dots, x_{|\alpha|}')\\
\end{aligned}
\end{equation*}
In addition, a CWL function is called Abelian if the groups involved in the definition are Abelian.
\end{definition}

\begin{definition}[Consistent CWL Functions]\label{def:cwlConsis}
Let $\{\phi_f:f\in \mathcal{F}\}$ be a collection of CWL functions over $\{X_a:a\in \mathcal{A}\}$ where for each function $\phi_f: \mathcal{X}_{\alpha_f}\mapsto \mathcal{X}_{b_f}$, the index set of the input and output random variables is defined as $\mathcal{A}_f=\alpha_f\cup b_f$.
We say that functions $\{\phi_f:f\in \mathcal{F}\}$ are consistent CWL functions if and only if there exist finite groups $\{H_i:i\in \cup_{f\in\mathcal{F}} \mathcal{A}_f\}$ where $\mathcal{X}_i = H_i$ for $i\in \cup_{f\in\mathcal{F}} \mathcal{A}_f$, such that for each $f\in\mathcal{F}$, $\phi_f$ is a group homomorphism from $\prod_{a\in\alpha_f} H_a$ to $H_{b_f}$. 
\end{definition}

In order to distinguish the groups involved in CWL functions and group characterizations, the former are denoted by ``$H$'' and the latter by ``$G$''. 
We use $\mathbf{i}_a$ to denote the identity element in the group $G_a$ and $g_a^{-1}$ to denote the inverse of element $g_a\in G_a$.
Let $(G,\cdot),(H,\circ)$ be finite groups, we use ``$G\times H$'' to denote the external direct product of $G$ and $H$. 
For $(g,h),(g',h')\in G\times H$, we define $(g,h)(g',h')=(g\cdot g', h\circ h')$.


\section{Linear Network Code}\label{sec:linear}

Since vector spaces and subspaces are Abelian groups, it holds immediately by our definitions that a linear function is also (Abelian) group characterizable and CWL, which explains the first unmarked arrow in Figure~\ref{fig:functions}.
We start by studying local and global variants of linear codes.
The results of this section are folklore and given here for completeness.
The proof of Theorem~\ref{theorem:linear1} appears in the Appendix.


\begin{theorem}\label{theorem:linear1}
Let $\mathcal{I}$ be a network coding instance and $\{X_f: f\in\mathcal{S}\cup\mathcal{E}\}$ be a network code on $\mathcal{I}$.
Then $\{X_f: f\in\mathcal{S}\cup \mathcal{E}\}$ can be obtained through global encoding functions that are linear if and only if $\{X_f: f\in \mathcal{S}\cup\mathcal{E}\}$ can be obtained by local encoding functions that are linear.
\end{theorem}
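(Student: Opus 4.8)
The plan is to prove the two implications separately, both exploiting that $\mathcal{I}$ is acyclic (so its edges admit a topological order) and that linear maps both compose and factor well. Throughout I would fix a single finite field $\mathbb{F}$ and realize the source space $\mathcal{X}_{\mathcal{S}}=\prod_{s\in\mathcal{S}}\mathcal{X}_s$ as a vector space over $\mathbb{F}$; each edge variable $X_f$ then lives in an ambient space through its image, and the subspace $V_f$ required by Definition~\ref{def:linearFunc} is taken to be the support of $X_f$. This realization is forced on us in both directions, since a linear local (resp.\ global) function already equips the source and edge alphabets with a compatible vector-space structure.

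For the implication that linear local functions yield linear global functions, I would induct along the topological order of the edges. For an edge $e=(u,v)$ leaving a source the global function coincides with the local one and is linear by hypothesis. For a general edge $e=(u,v)$ the linear local function gives $X_e=X_{\In(u)}T_e$, while the inductive hypothesis supplies linear global functions $X_{e'}=X_{\mathcal{S}}F_{e'}$ for every $e'\in\In(u)$. Stacking the $F_{e'}$ into one matrix $F_{\In(u)}$ and composing yields $X_e=X_{\mathcal{S}}F_{\In(u)}T_e$, so $F_e:=F_{\In(u)}T_e$ is a linear global encoding matrix for $e$; surjectivity onto $V_e$ is inherited because the random variable $X_e$ is unchanged. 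This direction is essentially routine, as composition of linear maps is linear.

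The reverse direction is the crux. Here I am given linear global functions $X_e=X_{\mathcal{S}}F_e$ and must construct linear local functions realizing the same code. The key observation is that, because $\{X_f\}$ is a valid network code, $X_e$ is a deterministic function of $X_{\In(u)}$ for $e=(u,v)$, i.e.\ $H(X_e\mid X_{\In(u)})=0$. Translating this into linear algebra: since the independent uniform sources make $X_{\mathcal{S}}$ range over all of $\mathcal{X}_{\mathcal{S}}$, whenever $z\,F_{\In(u)}=0$ we must have $z\,F_e=0$, so the null space of the map $z\mapsto z\,F_{\In(u)}$ is contained in that of $z\mapsto z\,F_e$. By the standard factorization lemma for matrices with nested null spaces, there exists a matrix $T_e$ with $F_e=F_{\In(u)}T_e$. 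Setting $\phi_{le}(X_{\In(u)})=X_{\In(u)}T_e$ is then linear and correct, because $X_{\In(u)}T_e=X_{\mathcal{S}}F_{\In(u)}T_e=X_{\mathcal{S}}F_e=X_e$, and it is surjective onto $V_e$ since $X_e$ ranges over all of $V_e$.

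I expect the main obstacle to be the passage from the information-theoretic determinism $H(X_e\mid X_{\In(u)})=0$ to the precise null-space containment, together with verifying that the constructed local assignments are mutually consistent, i.e.\ that a single ambient space, field $\mathbb{F}$, and family of subspaces $\{V_f\}$ can be chosen uniformly so that every $\phi_{le}$ is simultaneously a legitimate linear function per Definition~\ref{def:linearFunc}. Once the supports of the $X_f$ are fixed as the subspaces $V_f$ over a common $\mathbb{F}$, these consistency checks should reduce to the observations already made above.
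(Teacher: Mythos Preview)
Your proposal is correct and follows essentially the same approach as the paper: induction along the topological order for the local-to-global direction, and for the global-to-local direction the observation that determinism of $X_e$ given $X_{\In(u)}$ forces a linear factorization $F_e=F_{\In(u)}T_e$. The only cosmetic difference is that the paper phrases the factorization criterion via the rank equality $\mathrm{rank}(N_{\In(u),e})=\mathrm{rank}(N_{\In(u)})$ (derived through the computation $H(X_\alpha)=\mathrm{rank}(N_\alpha)$), whereas you argue the equivalent null-space containment directly from $H(X_e\mid X_{\In(u)})=0$.
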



\section{Group Network Code} \label{sec:group}

\begin{lemma} \label{lemma:groupunionofRV}
Let $\{X_f: f\in \mathcal{A}\}$ and $\{X_f: f\in \mathcal{B}\}$ be sets of random variables characterized by a finite group $G$ and subgroups $\{G_f: f\in \mathcal{A}\}$ and $\{G_f: f\in \mathcal{B}\}$, respectively. 
Assume that for any single random variable $X$ that appears as $X_a$ for $a\in \mathcal{A}$ and $X_b$ for $b\in \mathcal{B}$, we have $G_a= G_b$.
Then the set of random variables $\{X_f: f\in \mathcal{A}\cup \mathcal{B}\}$ is group characterizable.
\end{lemma}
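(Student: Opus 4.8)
The plan is to construct a single group characterization of $\{X_f : f \in \mathcal{A} \cup \mathcal{B}\}$ directly, reusing the common ambient group $G$ and gluing together the two given subgroup families. First I would define a combined family $\{G_f : f \in \mathcal{A} \cup \mathcal{B}\}$: for an index $f$ appearing only in $\mathcal{A}$ (respectively only in $\mathcal{B}$) I take $G_f$ from the first (respectively the second) characterization. The only ambiguity is for a random variable $X$ that is indexed both by some $a \in \mathcal{A}$ and some $b \in \mathcal{B}$, and here the consistency hypothesis $G_a = G_b$ forces the two candidate subgroups to coincide. Hence the assignment $f \mapsto G_f$ is well defined on all of $\mathcal{A} \cup \mathcal{B}$, and each $G_f$ is a subgroup of the same finite group $G$.

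Next I would draw a single element $g_r$ uniformly at random from $G$ and consider the tuple $(g_r G_f : f \in \mathcal{A} \cup \mathcal{B})$. By Definition~\ref{def:groupRV}, the pair $(G, \{G_f : f \in \mathcal{A} \cup \mathcal{B}\})$ is, by construction, a group characterization of these cosets; what must be verified is that it is faithful to the data, i.e. that deleting the coordinates outside $\mathcal{A}$ reproduces exactly the first given characterization of $\{X_f : f \in \mathcal{A}\}$, and likewise for $\mathcal{B}$. This is immediate, since both restrictions use the same group $G$, the same subgroups (by the previous paragraph), and the same uniform $g_r$; using the support and quasi-uniformity property of group characterizable variables noted after Definition~\ref{def:groupRV}, the joint law on any index subset is determined solely by $G$ and the relevant subgroups, so the $\mathcal{A}$- and $\mathcal{B}$-marginals of the combined characterization coincide with the two given ones. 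Thus $\{X_f : f \in \mathcal{A} \cup \mathcal{B}\}$ is group characterizable.

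The only load-bearing step is the well-definedness of the glued family, which is exactly where the consistency hypothesis enters: without $G_a = G_b$ on shared variables one could not assign a single subgroup per random variable, and the two characterizations might impose incompatible coset structures on the same $X$. The remaining subtlety a careful write-up must address is that a single uniform element $g_r$, rather than two a priori unrelated uniform elements, should simultaneously realize both given characterizations; because both are built over the identical group $G$ and the induced joint distribution is pinned down by the subgroups alone, this identification is legitimate and no nontrivial coupling argument between two independent uniform elements is required.
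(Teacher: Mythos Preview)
Your proposal is correct and is essentially the approach the paper takes: the paper simply remarks that Lemma~\ref{lemma:groupunionofRV} ``follows directly from our definitions,'' and your gluing of the subgroup families over the common ambient group $G$ is precisely the unpacking of that remark. Your explicit attention to well-definedness on shared indices and to the marginals is more detail than the paper provides, but it is exactly the same idea.
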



Lemma~\ref{lemma:groupunionofRV} follows directly from our definitions.
With Lemma~\ref{lemma:groupunionofRV}, we can prove the following two theorems, the detailed proofs appear in the Appendix.


\begin{theorem} \label{theorem:group1}
Let $\mathcal{I}$ be a network coding instance and $\{X_f: f\in\mathcal{S}\cup\mathcal{E}\}$ be a network code on $\mathcal{I}$. 
The network code is a group network code if and only if the local encoding functions on every edge $e\in\mathcal{E}$ have a consistent group characterization.
\end{theorem}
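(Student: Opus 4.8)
The plan is to prove the two directions separately, with the forward direction following by restriction of a global characterization, and the backward direction following by an inductive application of Lemma~\ref{lemma:groupunionofRV}.

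For the forward direction (group network code $\Rightarrow$ consistent local characterizations), I would start from a single finite group $G$ and subgroups $\{G_f: f\in\mathcal{S}\cup\mathcal{E}\}$ characterizing $\{X_f: f\in\mathcal{S}\cup\mathcal{E}\}$, as guaranteed by Definition~\ref{def:groupNC}. Fix an edge $e=(u,v)$ and let $\mathcal{A}_e$ be the index set of the variables involved in $\phi_{le}$, namely $\mathcal{A}_e=\In(u)\cup\{e\}$ (or $\{u\}\cup\{e\}$ when $u$ is a source). Since any subcollection of group characterizable random variables is again group characterizable by the same $G$ and the restricted subgroups, $\{X_f: f\in\mathcal{A}_e\}$ is characterized by $G$ and $\{G_f: f\in\mathcal{A}_e\}$, so $\phi_{le}$ is group characterizable in the sense of Definition~\ref{def:groupFunc}. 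Because the same $G$ and the same subgroup $G_f$ are reused for every edge and every shared variable, the family $\{\phi_{le}: e\in\mathcal{E}\}$ satisfies Definition~\ref{def:groupConsis}, which is the consistent group characterization sought. I would also remark briefly that the coset projection induced by the subgroups agrees with the actual $\phi_{le}$ on the support, since the functional relation $X_e=\phi_{le}(X_{\In(u)})$ holds in the group-characterized distribution and $G_{\In(u)}\subseteq G_e$.

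For the backward direction (consistent local characterizations $\Rightarrow$ group network code), let $G$ and $\{G_i: i\in\bigcup_{e}\mathcal{A}_e\}$ be the common group and subgroups supplied by Definition~\ref{def:groupConsis}. The first observation is that $\bigcup_{e\in\mathcal{E}}\mathcal{A}_e$ already contains all of $\mathcal{S}\cup\mathcal{E}$: every edge $e$ appears as the output index of its own local function, and every source appears as an input index to the local functions of its outgoing edges (we may assume without loss of generality that each source has at least one outgoing edge). It therefore suffices to glue the local characterizations into one global characterization, which I would do by induction over an enumeration $e_1,\ldots,e_{|\mathcal{E}|}$ of the edges. Setting $\mathcal{B}_k=\bigcup_{i\le k}\mathcal{A}_{e_i}$, the base case $\{X_f:f\in\mathcal{A}_{e_1}\}$ is group characterizable by construction, and in the inductive step the two sets $\{X_f:f\in\mathcal{B}_k\}$ and $\{X_f:f\in\mathcal{A}_{e_{k+1}}\}$ are both characterized by the same $G$, with any shared variable assigned the same subgroup by consistency. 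Lemma~\ref{lemma:groupunionofRV} then yields that $\{X_f:f\in\mathcal{B}_{k+1}\}$ is group characterizable by $G$ and $\{G_f:f\in\mathcal{B}_{k+1}\}$. After processing all edges, $\{X_f:f\in\mathcal{S}\cup\mathcal{E}\}$ is group characterizable by $G$ and $\{G_f\}$, which is exactly Definition~\ref{def:groupNC}.

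The routine parts are the restriction argument and the bookkeeping of index sets; the step that requires the most care is the inductive gluing, where one must check that the consistency hypothesis of Definition~\ref{def:groupConsis} supplies precisely the condition $G_a=G_b$ for shared variables needed to invoke Lemma~\ref{lemma:groupunionofRV}. I expect the main (if mild) obstacle to be confirming that the union of the local index sets covers $\mathcal{S}\cup\mathcal{E}$ and disposing of the degenerate case of a source with no outgoing edge, after which the lemma does the real work.
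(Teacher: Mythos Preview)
Your proposal is correct and follows essentially the same approach as the paper: the forward direction is the restriction of the global characterization to each local index set, and the backward direction glues the local characterizations via Lemma~\ref{lemma:groupunionofRV} after observing that $\bigcup_{e\in\mathcal{E}}\mathcal{A}_e=\mathcal{S}\cup\mathcal{E}$. The only cosmetic difference is that the paper applies Lemma~\ref{lemma:groupunionofRV} in one stroke to the full union rather than spelling out the induction over edges, and it omits your careful bookkeeping about source coverage and the inclusion $G_{\In(u)}\subseteq G_e$.
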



\begin{theorem} \label{theorem:group2}
Let $\mathcal{I}$ be a network coding instance and $\{X_f: f\in\mathcal{S}\cup\mathcal{E}\}$ be a network code on $\mathcal{I}$.
The network code is a group network code if and only if the global encoding functions on every edge $e\in\mathcal{E}$ have a consistent group characterization.
\end{theorem}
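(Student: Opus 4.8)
The plan is to establish both directions of the equivalence, paralleling the local statement of Theorem~\ref{theorem:group1} but now over the index sets $\mathcal{S}\cup e$ that define the global encoding functions.

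For the forward direction, I would assume that $\{X_f: f\in\mathcal{S}\cup\mathcal{E}\}$ is a group network code, so that there is a single finite group $G$ and subgroups $\{G_f: f\in\mathcal{S}\cup\mathcal{E}\}$ that characterize all of the random variables. Fixing any edge $e$, the sub-collection $\{X_f: f\in\mathcal{S}\cup e\}$ is then characterized by $G$ together with $\{G_s: s\in\mathcal{S}\}\cup\{G_e\}$, so by Definition~\ref{def:groupFunc} the global encoding function $\phi_{ge}$ is group characterizable. Because every $\phi_{ge}$ is characterized within the same ambient group $G$ using the restriction to $\mathcal{S}\cup e$ of the one subgroup family $\{G_f: f\in\mathcal{S}\cup\mathcal{E}\}$, the collection $\{\phi_{ge}: e\in\mathcal{E}\}$ satisfies Definition~\ref{def:groupConsis} and is consistently group characterizable. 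The only point needing care is the surjectivity required by Definition~\ref{def:groupFunc}, which holds once $\mathcal{X}_e$ is read as the coset space $G/G_e$ as in Definition~\ref{def:groupRV}, since $X_e=g_r G_e$ ranges over all of $G/G_e$.

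For the reverse direction, I would start from a consistent group characterization of $\{\phi_{ge}: e\in\mathcal{E}\}$, which by Definition~\ref{def:groupConsis} supplies one group $G$ and subgroups indexed by $\bigcup_{e\in\mathcal{E}}(\mathcal{S}\cup e)=\mathcal{S}\cup\mathcal{E}$ such that, for each edge $e$, the variables $\{X_f: f\in\mathcal{S}\cup e\}$ are characterized by $G$ and $\{G_s:s\in\mathcal{S}\}\cup\{G_e\}$. The aim is to merge these per-edge characterizations into a single characterization of the whole network code. I would do this by applying Lemma~\ref{lemma:groupunionofRV} inductively: enumerating the edges as $e_1,\dots,e_{|\mathcal{E}|}$, at step $j$ I combine the characterization accumulated for $\{X_f: f\in\mathcal{S}\cup\{e_1,\dots,e_{j-1}\}\}$ with that of $\{X_f: f\in\mathcal{S}\cup e_j\}$. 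These two sets overlap in exactly the sources $\mathcal{S}$, and by consistency they assign each source $s$ the common subgroup $G_s$, which is precisely the hypothesis of Lemma~\ref{lemma:groupunionofRV}. After the final step we obtain that $\{X_f: f\in\mathcal{S}\cup\mathcal{E}\}$ is characterized by $G$ and $\{G_f: f\in\mathcal{S}\cup\mathcal{E}\}$, i.e., the network code is a group network code.

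The step I expect to be the main obstacle is this inductive merge: one must verify that the characterization produced by Lemma~\ref{lemma:groupunionofRV} after combining several edges still assigns each source its original subgroup $G_s$, so that the lemma's shared-variable hypothesis continues to hold at the next step. This reduces to the observation that the index set $\mathcal{S}\cup e_j$ of each new edge overlaps the accumulated index set $\mathcal{S}\cup\{e_1,\dots,e_{j-1}\}$ only in the source coordinates; hence the consistency of the source subgroups guaranteed by Definition~\ref{def:groupConsis} is exactly what the induction requires, and no conflict between distinct edge subgroups can occur.
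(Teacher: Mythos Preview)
Your proposal is correct and follows essentially the same route as the paper: both directions rest on the identity $\mathcal{S}\cup\mathcal{E}=\bigcup_{e\in\mathcal{E}}(\mathcal{S}\cup e)$ together with Lemma~\ref{lemma:groupunionofRV}, with the only-if direction being immediate from Definition~\ref{def:groupNC}. The paper's proof simply refers back to the analysis of Theorem~\ref{theorem:group1} rather than spelling out the inductive merge you describe, but the underlying argument is the same.
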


Theorems \ref{theorem:group1} and \ref{theorem:group2} show that a code with a consistent group structure, whether given in local or global form, implies a group structure on all random variables in the given code.
While a local encoding function $\phi_{le}$ on an edge $e=(u,v)$ can be group characterizable, the local encoding operation in $\phi_{le}$ is based on the relation between the subgroups $\{G_{f}:f \in In(u)\cup e\}$ and the ambient group $G$. 
In what follows we seek to better understand this relation in an attempt to give it a concrete operational interpretation.


\section{Between Group Characterizable functions and CWL functions} \label{sec:groupandcwl}

\begin{theorem}\label{theorem:groupCWL4}
Let $\mathcal{I}$ be a network coding instance and $\{X_f: f\in\mathcal{S}\cup\mathcal{E}\}$ be a network code on $\mathcal{I}$. 
For any $e\in\mathcal{E}$, if the global encoding function $\phi_{ge}$ is Abelian group characterizable, then $\phi_{ge}$ is Abelian CWL.
\end{theorem}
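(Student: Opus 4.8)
The plan is to turn the quotient-group structure that is implicit in an Abelian group characterization into an explicit coordinate-wise homomorphism. Suppose $\phi_{ge}$ is Abelian group characterizable, so by Definition~\ref{def:groupFunc} there is a finite Abelian group $G$ with subgroups $\{G_s : s\in\mathcal{S}\}$ and $G_e$ that characterize $\{X_s : s \in \mathcal{S}\}\cup\{X_e\}$. Because $G$ is Abelian every subgroup is normal, so the coset spaces $\mathcal{X}_s = G/G_s$ and $\mathcal{X}_e = G/G_e$ are themselves Abelian groups under coset multiplication. I would take these as the candidate CWL groups, setting $H_s = G/G_s = \mathcal{X}_s$ for each $s\in\mathcal{S}$ and $H_e = G/G_e = \mathcal{X}_e$; this already meets the requirement $H_f = \mathcal{X}_f$ and the Abelian condition of the CWL definition, so the only remaining task is to show that $\phi_{ge}$ is a group homomorphism from $\prod_{s\in\mathcal{S}} H_s$ to $H_e$.

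The key device is the map $\psi : G \to \prod_{s\in\mathcal{S}} H_s$ sending $g \mapsto (gG_s : s\in\mathcal{S})$, which is a homomorphism because $G$ is Abelian (each $G_s$ is normal), together with the quotient map $\pi_e : G \to H_e$, $g \mapsto gG_e$. Since $X_s = g_r G_s$ and $X_e = g_r G_e$ for a uniform $g_r\in G$, on the support of $X_\mathcal{S}$ one has $\phi_{ge}(gG_s : s) = gG_e$; that is, $\phi_{ge}\circ \psi = \pi_e$. Two facts now combine to promote this identity into a genuine homomorphism on the full product. First, $\ker\psi = \cap_{s\in\mathcal{S}} G_s = G_\mathcal{S}$, and by the remark following Definition~\ref{def:groupFunc} applied with $\alpha=\mathcal{S}$ and $b=e$ we have $G_\mathcal{S}\subseteq G_e = \ker\pi_e$; thus $\pi_e$ factors through $\psi$, inducing a well-defined homomorphism $\bar\phi : \mathrm{Im}(\psi)\to H_e$ that agrees with $\phi_{ge}$ on the support. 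Second, because the sources are independent and uniform by the network-code definition, the support of $X_\mathcal{S}$ is the entire product $\prod_{s}\mathcal{X}_s$; equivalently $H(X_\mathcal{S})=\sum_s H(X_s)$ forces $|G|/|G_\mathcal{S}| = \prod_s |G|/|G_s|$, so $\psi$ is surjective and $\mathrm{Im}(\psi)=\prod_s H_s$. Hence $\phi_{ge}=\bar\phi$ is defined on all of $\prod_s H_s$ and is a homomorphism onto $H_e$, which is exactly the Abelian CWL property.

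I expect the main obstacle to be this domain issue rather than the algebra: a group characterization only constrains $\phi_{ge}$ on the support of $X_\mathcal{S}$, i.e.\ on the subgroup $\mathrm{Im}(\psi)$, whereas the CWL definition demands a homomorphism on the full product $\prod_s \mathcal{X}_s$. Resolving this is exactly where source independence is essential, since it makes $\psi$ surjective so that the support and the full product coincide and no homomorphic extension beyond the support is required. A secondary point worth stating carefully is the well-definedness of $\bar\phi$, which hinges on the kernel containment $G_\mathcal{S}\subseteq G_e$; I would cite this directly from the remark after Definition~\ref{def:groupFunc} and then invoke the first isomorphism theorem to obtain $\bar\phi$. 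Everything else (closure under the coordinate-wise operations and surjectivity inherited from $\pi_e$) then follows routinely.
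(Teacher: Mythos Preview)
Your proof is correct and follows essentially the same approach as the paper: both take the quotient groups $G/G_s$ and $G/G_e$ as the CWL groups, invoke source independence to establish that the map $g\mapsto(gG_s)_{s\in\mathcal{S}}$ hits the full product (the paper phrases this as every coset tuple having nonempty intersection), and then read off the homomorphism property from the Abelian structure. The only difference is presentational: the paper first reduces WLOG to $|G_\mathcal{S}|=1$ and verifies the homomorphism identity by an explicit element-level computation with $a=\cap_i a_iG_i$, whereas you package the same content via the kernel containment $G_\mathcal{S}\subseteq G_e$ and the first isomorphism theorem, but the underlying argument is identical.
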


\begin{proof}
By the assumption, given any $e\in\mathcal{E}$, the global encoding function $\phi_{ge}$ is Abelian group characterizable.
By Definition~\ref{def:groupFunc}, there exists a finite group $G$ with subgroups $\{G_f:f\in \mathcal{S}\cup e\}$ that characterize $\{X_f:f\in \mathcal{S}\cup e\}$.
By Definition~\ref{def:groupRV}, we define $g_r$ as a uniform random element in $G$, such that the coset $g_r G_\alpha$ represents $X_\alpha$ for any $\alpha\subseteq \mathcal{S}\cup e$.

In the context of group characterizable global encoding functions, let $(a_i G_i: i\in\mathcal{S})$ be any source message tuple.
By \cite{wei2017effect}, without loss of generality, we assume $|\cap_{i\in\mathcal{S}}G_i|=1$. 
By the assumption that source random variables $\{X_i:i\in\mathcal{S}\}$ are uniform and independent, we have $H(X_1,\dots,X_{|\mathcal{S}|})=\sum_{i\in\mathcal{S}} H(X_i)$ and $H(X_\alpha)=\log\frac{|G|}{|G_\alpha|}$ for any $\alpha\subseteq \mathcal{S}$.
Thus,
\begin{equation} \label{equa:acwl_0}
    \frac{|G|}{|G_\mathcal{S}|} = \prod_{i\in\mathcal{S}} \frac{|G|}{|G_i|}.
\end{equation}

We first show that for every source message tuple $(a_i G_i:i\in\mathcal{S})$ it holds that $\cap a_i G_i \neq \phi$. 
Each subgroup $G_i$ has $\frac{|G|}{|G_i|}$ cosets, such that the total number of tuples $|\{(a_i G_i:i\in\mathcal{S}): a_i\in G\}|$ equals $\prod_{i\in\mathcal{S}} \frac{|G|}{|G_i|}$.
Since by the definition each coset of $G_\mathcal{S}$ (there has $\frac{|G|}{|G_\mathcal{S}|}$ cosets in total) is a non-empty intersection corresponding to a message tuple, there are $\frac{|G|}{|G_\mathcal{S}|}$ non-empty intersections.
Thus, by (\ref{equa:acwl_0}), all intersections are non-empty, namely, $\cap_{i\in\mathcal{S}} a_i G_i\neq \phi$ for every $(a_i G_i: i\in\mathcal{S})$.

This implies that for every source message tuple $(a_i G_i: i\in\mathcal{S})$, it holds that $|\cap_{i\in\mathcal{S}} a_i G_i|=1$.
Therefore, there exists an element $a\in G$ such that $a=\cap_{i\in\mathcal{S}} a_i G_i$.
By Definition~\ref{def:groupFunc}, $\phi_{ge}$ outputs the coset of $G_e$ which includes $a$, such that
\begin{equation}\label{equa:acwl_1}
    \phi_{ge}(a_1 G_1,\dots,a_{|\mathcal{S}|} G_{|\mathcal{S}|}) = a G_e.
\end{equation}
Similarly, let $(b_i G_i)_{i\in\mathcal{S}}\neq (a_i G_i)_{i\in\mathcal{S}}$ be another source message tuple and let $b=\cap_{i\in\mathcal{S}} b_i G_i$; we have 
\begin{equation}\label{equa:acwl_2}
    \phi_{ge}(b_1 G_1,\dots, b_{|\mathcal{S}|} G_{|\mathcal{S}|}) = b G_e.
\end{equation}
For each $i\in\mathcal{S}$, as $a\in a_i G_i$, $b\in b_i G_i$, and $G$ is Abelian, $ab\in a_i G_i b_i G_i=a_i b_i G_i$ which implies $ab\in \cap_{i\in\mathcal{S}} (a_i b_i G_i)$.
Since $|G_\mathcal{S}|=1$ by assumption, we have $ab=\cap_{i\in\mathcal{S}} a_i b_i G_i$ which implies
\begin{equation}\label{equa:acwl_4}
    \phi_{ge}(a_1 b_1 G_1,\dots,a_{|\mathcal{S}|} b_{|\mathcal{S}|} G_{|\mathcal{S}|}) = a b G_e.
\end{equation}
By (\ref{equa:acwl_1}), (\ref{equa:acwl_2}), and (\ref{equa:acwl_4}), it holds that
\begin{equation*}
\begin{aligned}
    &\phi_{ge}(a_1 G_1,\dots,a_{|\mathcal{S}|} G_{|\mathcal{S}|})\cdot \phi_{ge}(b_1 G_1,\dots, b_{|\mathcal{S}|} G_{|\mathcal{S}|})\\ 
    =& \phi_{ge}(a_1 b_1 G_1,\dots,a_{|\mathcal{S}|} b_{|\mathcal{S}|} G_{|\mathcal{S}|}).    
\end{aligned}
\end{equation*}

By our assumption, $G$ is Abelian and all subgroups of Abelian group are normal.
By Proposition 7.11 in \cite{humphreys1996course}, $\{G/G_f:f\in\mathcal{S}\cup e\}$ are groups; thus $\phi_{ge}$ is a group homomorphism from the product of groups $G/G_1 \times \dots \times G/G_{|\mathcal{S}|}$ to the factor group $G/G_e$ (for formal definitions of concepts see \cite{gallian2016contemporary}).
Thus, $\phi_{ge}$ is Abelian CWL by Definition~\ref{def:cwlConsis}.

\end{proof}


The opposite direction representing CWL codes as group codes, was presented (under a slightly different set of definitions) in \cite{wei2019locala}.
The following theorem is given for completeness, and is proven in the Appendix.


\begin{theorem}\label{theorem:groupCWL5}
Let $\mathcal{I}$ be a network coding instance and $\{X_f: f\in\mathcal{S}\cup\mathcal{E}\}$ be a network code on $\mathcal{I}$. 
For any edge $e\in\mathcal{E}$, if the global encoding function $\phi_{ge}$ is CWL, then $\phi_{ge}$ is group characterizable.
\end{theorem}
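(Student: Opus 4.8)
The plan is to read off a group characterization directly from the homomorphism structure guaranteed by the CWL hypothesis. Since $\phi_{ge}$ is CWL, by definition there are finite groups $\{H_f : f\in\mathcal{S}\cup e\}$ with $H_f=\mathcal{X}_f$ such that $\phi_{ge}$ is a group homomorphism from $\prod_{i\in\mathcal{S}} H_i$ to $H_e$. I would take the ambient group to be the external direct product $G=\prod_{i\in\mathcal{S}} H_i$, which is finite (and Abelian whenever the CWL structure is Abelian). For each source $i\in\mathcal{S}$ I would set $G_i=\ker\pi_i$, where $\pi_i\colon G\to H_i$ is the coordinate projection, i.e.\ the subgroup of tuples whose $i$-th coordinate is $\mathbf{i}_{H_i}$; and for the edge I would set $G_e=\ker\phi_{ge}$. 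Both are subgroups of $G$, being kernels of homomorphisms.

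Next I would verify that $\big(G,\{G_i\}_{i\in\mathcal{S}},G_e\big)$ is a group characterization of $\{X_f:f\in\mathcal{S}\cup e\}$ in the sense of Definition~\ref{def:groupRV}. Writing the uniform element as $g_r=(g_1,\dots,g_{|\mathcal{S}|})$, its coordinates $g_i$ are independent and uniform over $H_i$. The coset $g_r G_i$ is determined by and determines $\pi_i(g_r)=g_i$, so identifying the left cosets of $G_i$ with $H_i=\mathcal{X}_i$ via $\pi_i$ yields $X_i=g_r G_i$ with exactly the uniform, independent source distribution of the network code. Since $\phi_{ge}$ is surjective, the first isomorphism theorem (e.g.\ \cite{gallian2016contemporary}) puts the left cosets of $G_e$ in bijection with $H_e=\mathcal{X}_e$ via $\phi_{ge}$, and $g_r G_e$ corresponds to $\phi_{ge}(g_r)$. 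I would then confirm that the function induced by this characterization coincides with $\phi_{ge}$: because $G_{\mathcal{S}}=\cap_{i\in\mathcal{S}} G_i=\{\mathbf{i}_G\}$, any source tuple $(a_iG_i:i\in\mathcal{S})$ has intersection $\cap_{i\in\mathcal{S}} a_iG_i$ equal to the single element $a=(\pi_1(a_1),\dots,\pi_{|\mathcal{S}|}(a_{|\mathcal{S}|}))$, and the characterization outputs the coset $aG_e$, which corresponds under $\phi_{ge}$ to $\phi_{ge}(a)$ --- precisely the value $\phi_{ge}$ assigns to that tuple. This computation simultaneously establishes $G_{\mathcal{S}}\subseteq G_e$, the consistency condition noted after Definition~\ref{def:groupFunc}.

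The main obstacle is not a deep one: it is the bookkeeping needed to show that the abstract characterization literally reproduces the alphabets $\mathcal{X}_i,\mathcal{X}_e$ and the map $\phi_{ge}$, i.e.\ keeping the identifications between cosets and group elements consistent across the sources and the edge. Once the two families of kernels are identified as the correct subgroups, the verification reduces to the direct-product structure of $G$ together with the first isomorphism theorem. This argument mirrors, in the global-encoding setting, the converse direction of Theorem~\ref{theorem:groupCWL4}, and recovers (under the present definitions) the CWL-to-group passage reported in \cite{wei2019locala}.
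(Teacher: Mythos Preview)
Your proposal is correct and follows essentially the same route as the paper: take $G=\prod_{i\in\mathcal{S}}H_i$, set $G_i=\ker\pi_i$ for each source and $G_e=\ker\phi_{ge}$, and identify the coset spaces with the original alphabets via the coordinate projections and the first isomorphism theorem. The paper's version spells out the isomorphisms $\psi_i\colon H_i\to G/G_i$ and $\psi_e\colon G/G_e\to H_e$ explicitly and phrases the final step in terms of auxiliary variables $Y_f=g_rG_f$, but the underlying construction is identical to yours.
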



One may attempt to extend Theorem~\ref{theorem:groupCWL4} to general group structures.
Our proof for Theorem~\ref{theorem:groupCWL4} will not extend directly, as subgroups of a non-Abelian group are not necessarily normal. 
That is, the left cosets corresponding to a given subgroup do not necessarily form a group, and thus cannot be used in the definition of CWL functions. 
Nevertheless, this does not imply that rates achievable using group codes cannot be obtained (or approached) by potentially different codes with global CWL functions. 
This latter problem is left for future study.


\section{CWL Network Codes}\label{sec:cwl}

We now study the connection between local and global CWL functions. 
The proofs for Theorem~\ref{theorem:CWL6} and Lemma~\ref{lemma:CWL2} below appear in the Appendix.


\begin{theorem}\label{theorem:CWL6}
Let $\mathcal{I}$ be a network coding instance and $\{X_f: f\in\mathcal{S}\cup\mathcal{E}\}$ be a network code on $\mathcal{I}$.
Assuming $\{X_f: f\in\mathcal{S}\cup\mathcal{E}\}$ can be obtained by local CWL encoding functions which are defined in a consistent manner, then $\{X_f: f\in\mathcal{S}\cup\mathcal{E}\}$ can be obtained through global CWL encoding functions which are defined in a consistent manner.
\end{theorem}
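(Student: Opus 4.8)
The plan is to proceed by induction on a topological ordering of the edges, exploiting the standard fact (noted in Section~\ref{sec:intro} and the model section) that in an acyclic network the global encoding function of an edge is obtained by composing its local encoding function with the global encoding functions of the edges incoming to its tail. Concretely, I would fix the consistent group structures $\{H_f : f\in\mathcal{S}\cup\mathcal{E}\}$ with $H_f=\mathcal{X}_f$ guaranteed by the hypothesis that the local functions $\{\phi_{le}\}$ are consistent CWL (Definition~\ref{def:cwlConsis}), and then build the global functions $\{\phi_{ge}\}$ using these \emph{same} groups, so that consistency of the resulting global family is immediate.

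First I would record the two elementary algebraic facts that drive the induction: (i) a composition of group homomorphisms is a group homomorphism; and (ii) if $\psi_d:\prod_{i\in\mathcal{S}} H_i \mapsto H_d$ is a homomorphism for each $d$ ranging over a finite index set, then the combined map $x\mapsto (\psi_d(x))_d$ is a homomorphism from $\prod_{i\in\mathcal{S}} H_i$ into $\prod_d H_d$; fact (ii) is immediate from the coordinate-wise definition of the group operation on a direct product. For the base case, take an edge $e=(s,v)$ leaving a source $s$. Here $\phi_{le}:H_s\mapsto H_e$ is a homomorphism, and the global function is $\phi_{ge}(X_{\mathcal{S}})=\phi_{le}(X_s)$, i.e. the composition of the coordinate projection $\pi_s:\prod_{i\in\mathcal{S}} H_i \mapsto H_s$ (a homomorphism) with $\phi_{le}$; by fact (i) this is a homomorphism $\prod_{i\in\mathcal{S}} H_i \mapsto H_e$, so $\phi_{ge}$ is CWL with the prescribed groups.

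For the inductive step I would take an edge $e=(u,v)$ whose tail $u$ is not a source. By the induction hypothesis each incoming edge $d\in\In(u)$ already has a global CWL function $\phi_{gd}:\prod_{i\in\mathcal{S}} H_i \mapsto H_d$. Since in the acyclic network $\phi_{ge}(X_{\mathcal{S}})=\phi_{le}(X_{\In(u)})=\phi_{le}\big((\phi_{gd}(X_{\mathcal{S}}))_{d\in\In(u)}\big)$, the map $\phi_{ge}$ factors as $\phi_{le}$ composed with the combined map $(\phi_{gd})_{d\in\In(u)}$. Fact (ii) shows the combined map is a homomorphism into $\prod_{d\in\In(u)} H_d$, and $\phi_{le}$ is a homomorphism out of exactly this product, so by fact (i) $\phi_{ge}$ is a homomorphism $\prod_{i\in\mathcal{S}} H_i \mapsto H_e$, hence CWL. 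Because every global function is assembled from the single fixed family $\{H_f\}$, the collection $\{\phi_{ge}\}$ is consistent CWL, completing the induction.

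The constructions are routine once the two homomorphism facts are in place; I expect the only point genuinely requiring care to be the bookkeeping, namely verifying that the group $H_f$ attached to each random variable by the consistent local characterization is used uniformly across all functions so that, in the inductive step, the output group of each $\phi_{gd}$ coincides with the corresponding input group of $\phi_{le}$. This matching is precisely what the consistency hypothesis supplies, and it is what makes the composition a well-defined homomorphism between the intended product groups.
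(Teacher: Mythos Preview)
Your proposal is correct and follows essentially the same approach as the paper: both argue by induction along a topological order of the acyclic network, writing $\phi_{ge^*}$ as the composition of $\phi_{le^*}$ with the tuple of already-established global homomorphisms $(\phi_{gd})_{d\in\In(u)}$ and checking that this composition is again a homomorphism with the same underlying groups. Your version is a bit more explicit about the two homomorphism facts and about why consistency guarantees the input/output groups match, but the structure and content of the argument coincide with the paper's.
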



\begin{lemma}\label{lemma:CWL2}
Let $(\mathcal{X},\cdot)$, $(\mathcal{Y},\circ)$ be finite groups and $\bar{\mathcal{X}}$ a subgroup of $\mathcal{X}$, where $\mathcal{X}$ is Abelian.
If there exists a group homomorphism $\bar{\phi}: \bar{\mathcal{X}}\mapsto \mathcal{Y}$, then there exists a group homomorphism $\phi: \mathcal{X}\mapsto \mathcal{Y}$ such that $\phi(x)=\bar{\phi}(x)$ for $x\in  \bar{\mathcal{X}}$.
\end{lemma}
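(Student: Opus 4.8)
The plan is to prove the extension lemma by reducing it to the fundamental fact that every finite Abelian group is a direct sum of cyclic groups, and then extending a homomorphism one cyclic generator at a time. Since $\mathcal{X}$ is a finite Abelian group and $\bar{\mathcal{X}}$ is a subgroup, I would first recall the structure theorem: we can write $\mathcal{X}$ as an internal direct sum $\bar{\mathcal{X}} \oplus C$ is \emph{not} generally available (subgroups of Abelian groups need not be direct summands), so the naive splitting approach fails. The cleaner route is the classical \emph{injective} (divisibility) argument or a direct inductive extension, and I would pursue the latter as it is elementary and self-contained.

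First I would set up an induction on the index $[\mathcal{X} : \bar{\mathcal{X}}]$. If the index is $1$, then $\bar{\mathcal{X}} = \mathcal{X}$ and we take $\phi = \bar{\phi}$. Otherwise, pick an element $x_0 \in \mathcal{X} \setminus \bar{\mathcal{X}}$ and let $\bar{\mathcal{X}}' = \langle \bar{\mathcal{X}}, x_0 \rangle$ be the subgroup generated by $\bar{\mathcal{X}}$ together with $x_0$; since $\mathcal{X}$ is Abelian, $\bar{\mathcal{X}}'$ is again a subgroup strictly containing $\bar{\mathcal{X}}$. The key step is to extend $\bar{\phi}$ from $\bar{\mathcal{X}}$ to $\bar{\mathcal{X}}'$; once this is done, induction finishes the job because $[\mathcal{X} : \bar{\mathcal{X}}'] < [\mathcal{X} : \bar{\mathcal{X}}]$.

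The heart of the argument, and the step I expect to be the main obstacle, is the single-generator extension. Let $m$ be the smallest positive integer with $m x_0 \in \bar{\mathcal{X}}$ (using additive notation for the Abelian group $\mathcal{X}$); such $m$ exists and divides the order of $x_0$ modulo $\bar{\mathcal{X}}$. Every element of $\bar{\mathcal{X}}'$ can be written as $w + k x_0$ with $w \in \bar{\mathcal{X}}$ and $0 \le k < m$. To define $\phi$ consistently I must choose an image $y_0 \in \mathcal{Y}$ for $x_0$ that respects the single relation $m x_0 = w_0$ for the fixed element $w_0 = m x_0 \in \bar{\mathcal{X}}$; that is, I need $y_0$ with $y_0^{\,m} = \bar{\phi}(w_0)$ (writing $\mathcal{Y}$ multiplicatively). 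The obstacle is that such an $m$-th root need not exist in a general finite group $\mathcal{Y}$, so the lemma as literally stated would be false without some hypothesis on $\mathcal{Y}$; I would therefore check whether the intended setting forces $\mathcal{Y}$ to be, for instance, a divisible-enough target (e.g.\ a copy of $G/G_e$ with appropriate divisibility) or whether the quotient $\bar{\mathcal{X}}/\ker\bar{\phi}$ embeds so that the root exists by order considerations. Granting the existence of a suitable $y_0$, I would set $\phi(w + k x_0) = \bar{\phi}(w) \circ y_0^{\,k}$ and verify this is well-defined (independent of the representation, which follows from $y_0^{\,m} = \bar{\phi}(m x_0)$) and is a homomorphism (a routine check using commutativity of $\mathcal{X}$).

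To summarize the structure of the write-up: I would state the induction on the index, isolate the cyclic-extension step as the crux, and devote the main effort to exhibiting the image $y_0$ of the new generator together with the well-definedness and homomorphism verifications. The routine verifications I would compress, while flagging the existence of the required root as the genuinely nontrivial point that relies on the specific groups arising in the construction. The Abelian hypothesis on $\mathcal{X}$ is used twice and essentially: once to guarantee $\bar{\mathcal{X}}'$ is a subgroup, and once to make the proposed $\phi$ multiplicative.
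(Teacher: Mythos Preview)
Your approach via induction on the index, extending by one cyclic generator at a time, is the classical Baer-style route and differs from the paper's argument. The paper instead invokes the Fundamental Theorem of Abelian groups to assert the existence of a subgroup $K\subseteq\mathcal{X}$ with $\mathcal{X}=K\cdot\bar{\mathcal{X}}$, and then defines $\phi(k\cdot\bar{x})=\bar{\phi}(\bar{x})$ for $k\in K$, $\bar{x}\in\bar{\mathcal{X}}$.

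You are right, however, to flag the $m$-th root step as the genuine obstruction, and in fact the lemma as literally stated is \emph{false}: take $\mathcal{X}=\mathbb{Z}/4\mathbb{Z}$, $\bar{\mathcal{X}}=\{0,2\}\cong\mathbb{Z}/2\mathbb{Z}$, $\mathcal{Y}=\mathbb{Z}/2\mathbb{Z}$, and let $\bar{\phi}$ be the isomorphism. Any homomorphism $\phi:\mathbb{Z}/4\mathbb{Z}\to\mathbb{Z}/2\mathbb{Z}$ satisfies $\phi(2)=2\phi(1)=0\ne 1=\bar{\phi}(2)$, so no extension exists. In your framework this is exactly the failure of a square root of $\bar{\phi}(2)$ to exist in $\mathcal{Y}$. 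The paper's proof conceals the same obstruction: for $\phi(k\cdot\bar{x})=\bar{\phi}(\bar{x})$ to be well defined one needs $K\cap\bar{\mathcal{X}}\subseteq\ker\bar{\phi}$, and for $\phi$ to then agree with $\bar{\phi}$ on $\bar{\mathcal{X}}$ one effectively needs $K\cap\bar{\mathcal{X}}$ trivial, i.e.\ $\bar{\mathcal{X}}$ a direct summand of $\mathcal{X}$; but subgroups of finite Abelian groups are not direct summands in general, and in the $\mathbb{Z}/4\mathbb{Z}$ example no such $K$ exists. So your diagnosis is sharper than the paper's: you isolate precisely where an additional hypothesis on $\mathcal{Y}$ (or on the specific groups arising in the application) is required, whereas the paper's splitting argument glosses over the same gap.
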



\begin{theorem}\label{theorem:CWL7}
Let $\mathcal{I}$ be a network coding instance and $\{X_f: f\in\mathcal{S}\cup\mathcal{E}\}$ be a network code on $\mathcal{I}$.
Assuming $\{X_f: f\in\mathcal{S}\cup\mathcal{E}\}$ can be obtained through global encoding functions which are Abelian CWL and defined in a consistent manner, then $\{X_f: f\in\mathcal{S}\cup\mathcal{E}\}$ can be obtained by local encoding functions which are Abelian CWL and defined in a consistent manner.
\end{theorem}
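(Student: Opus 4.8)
The plan is to reuse the single consistent family of Abelian groups $\{H_f : f\in\mathcal{S}\cup\mathcal{E}\}$ supplied by the global Abelian CWL hypothesis (so that $\mathcal{X}_f = H_f$ and each $\phi_{ge}$ is a homomorphism from $\prod_{i\in\mathcal{S}} H_i$ to $H_e$), and to manufacture for every edge a local homomorphism over these same groups. The guiding observation is that the incoming messages to a node are themselves a homomorphic image of the sources, so the local function is forced to be a homomorphism on the support of those messages; the only real work is to extend it to the full product group demanded by the CWL definition. Throughout, I would use that the sources are independent and uniform, so $X_\mathcal{S}$ is uniform over all of $\prod_{i\in\mathcal{S}} H_i$ (full support).

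Concretely, fix an edge $e=(u,v)$. First I would form the product map $\Phi_{\In(u)} := (\phi_{gf})_{f\in\In(u)} : \prod_{i\in\mathcal{S}} H_i \to \prod_{f\in\In(u)} H_f$, which is a group homomorphism because each component $\phi_{gf}$ is. Its image $\bar{\mathcal{X}}_{\In(u)}$ is a subgroup of $\prod_{f\in\In(u)} H_f = \mathcal{X}_{\In(u)}$ and, by full support, is exactly the support of the incoming message tuple $X_{\In(u)}$. On this subgroup I would define $\bar\phi_{le}$ by $\bar\phi_{le}(\Phi_{\In(u)}(x_\mathcal{S})) := \phi_{ge}(x_\mathcal{S})$.

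The well-definedness of $\bar\phi_{le}$ is where the network structure enters: since the given network code already expresses $X_e$ as a function of $X_{\In(u)}$ and $X_\mathcal{S}$ has full support, any two source tuples with $\Phi_{\In(u)}(x_\mathcal{S}) = \Phi_{\In(u)}(x'_\mathcal{S})$ must satisfy $\phi_{ge}(x_\mathcal{S}) = \phi_{ge}(x'_\mathcal{S})$; equivalently, $\ker\Phi_{\In(u)}\subseteq\ker\phi_{ge}$. Granting this, $\bar\phi_{le}$ is the well-defined map induced by factoring $\phi_{ge}$ through $\Phi_{\In(u)}$, and it is a homomorphism from $\bar{\mathcal{X}}_{\In(u)}$ to $H_e$. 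For a source edge $e=(s,v)$ the projection $\prod_{i\in\mathcal{S}} H_i \to H_s$ is onto, so $\bar{\mathcal{X}}_{\{s\}} = H_s$ and $\bar\phi_{le}$ is already defined on the whole input group.

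Finally, since $\prod_{f\in\In(u)} H_f$ is a product of Abelian groups and hence Abelian, I would invoke Lemma~\ref{lemma:CWL2} with $\mathcal{X} = \prod_{f\in\In(u)} H_f$, $\bar{\mathcal{X}} = \bar{\mathcal{X}}_{\In(u)}$, and $\mathcal{Y} = H_e$ to extend $\bar\phi_{le}$ to a homomorphism $\phi_{le} : \prod_{f\in\In(u)} H_f \to H_e$ agreeing with $\bar\phi_{le}$ on $\bar{\mathcal{X}}_{\In(u)}$. This $\phi_{le}$ is an Abelian CWL local function by definition, and because it agrees with $\bar\phi_{le}$ on the support of $X_{\In(u)}$ it reproduces the correct edge message $X_e$; values outside the support never occur and are irrelevant. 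Using the one family $\{H_f\}$ for every edge makes the resulting collection $\{\phi_{le}\}$ consistent in the sense of Definition~\ref{def:cwlConsis}, since each shared random variable $X_f$ is always assigned the same group $H_f$. The main obstacle is precisely this extension step: the naturally defined local map lives only on the proper subgroup $\bar{\mathcal{X}}_{\In(u)}$, and a homomorphism defined on a subgroup need not extend to the whole group in general. It is exactly the Abelian hypothesis, accessed through Lemma~\ref{lemma:CWL2}, that makes the extension available, which is why the theorem is confined to the Abelian setting.
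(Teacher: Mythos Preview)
Your proposal is correct and follows essentially the same route as the paper: form the product homomorphism $\Phi_{\In(u)}$ of the global maps, note its image $\bar{\mathcal{X}}_{\In(u)}$ is a subgroup, factor $\phi_{ge}$ through it to get a homomorphism $\bar\phi_{le}$ on that subgroup, and then invoke Lemma~\ref{lemma:CWL2} to extend to all of $\prod_{f\in\In(u)} H_f$. If anything, you are slightly more explicit than the paper about well-definedness (the containment $\ker\Phi_{\In(u)}\subseteq\ker\phi_{ge}$) and about why the resulting family is consistent.
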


\begin{proof}
For any edge $e^*=(u,v)\in\mathcal{E}$, by the assumption in the theorem, the global encoding functions $\{\phi_{ge}:e\in\In(u)\cup e^*\}$ are CWL and $\{\mathcal{X}_f:f\in \mathcal{S}\cup \In(u)\cup e^*\}$ are finite groups.
Let $\phi_{le^*}: \mathcal{X}_{\In(u)}\mapsto \mathcal{X}_{e^*}$ be the local encoding function on $e^*$.

We define the function $\phi_{\In(u)}$ which maps from $\mathcal{X}_\mathcal{S}$ to $\mathcal{X}_{\In(u)}$ by $\phi_{\In(u)}(X_\mathcal{S})=(\phi_{ge}(X_\mathcal{S})|e\in\In(u))$.
For any $x_\mathcal{S}, x_\mathcal{S}'\in \mathcal{X}_\mathcal{S}$, by our definitions we have $\phi_{\In(u)}(x_\mathcal{S}\cdot x_\mathcal{S}') = \phi_{\In(u)}(x_\mathcal{S})\cdot \phi_{\In(u)}(x_\mathcal{S}')$ such that $\phi_{\In(u)}$ is a group homomorphism.
Here, the product is done component-wise accordingly to the operation on groups $\{\mathcal{X}_f: f\in \In(u)\}$.
By the Properties of Subgroups Under Homomorphisms \cite{gallian2016contemporary}, $\phi_{\In(u)}(\mathcal{X}_\mathcal{S})$ is a subgroup of $\mathcal{X}_{\In(u)}$.
We define $\bar{\mathcal{X}}_{\In(u)} = \phi_{\In(u)}(\mathcal{X}_\mathcal{S})$.
Namely, $\bar{\mathcal{X}}_{\In(u)}$ consists of all edge message tuples that appear in the communication (of some source information).

We define the function $\bar{\phi}_{le^*}:\bar{\mathcal{X}}_{\In(u)}\mapsto \mathcal{X}_{e^*}$ such that
\begin{equation*}
\begin{aligned}
    X_{e^*} 
    =& \bar{\phi}_{le^*}(X_{e_1},\dots, X_{e_{|\In(u)|}})\\
    =& \bar{\phi}_{le^*}(\phi_{ge_1}(X_{\mathcal{S}}), \dots, \phi_{ge_{|\In(u)|}}(X_{\mathcal{S}}))\\
    =& \phi_{ge^*}(X_{\mathcal{S}}).
\end{aligned}
\end{equation*}

Given any $x_{\In(u)}, x_{\In(u)}'\in \bar{\mathcal{X}}_{\In(u)}$, there exist source message tuples  $x_{\mathcal{S}}, x_{\mathcal{S}}'\in\mathcal{X}_\mathcal{S}$ such that $x_{\In(u)} = \phi_{\In(u)}(x_\mathcal{S})$ and $x_{\In(u)}' = \phi_{\In(u)}(x_\mathcal{S}')$.

On one hand,
\begin{equation*}
\begin{aligned}
    & \bar{\phi}_{le^*}(x_{e_1}x_{e_1}', \dots, x_{e_{|\In(u)|}} x_{e_{|\In(u)|}}' )\\
    =& \bar{\phi}_{le^*}(\phi_{ge_1}(x_{\mathcal{S}}) \cdot \phi_{ge_1}(x_{\mathcal{S}}'), \dots,\\ &\phi_{ge_{|\In(u)|}}(x_{\mathcal{S}}) \cdot \phi_{ge_{|\In(u)|}}(x_{\mathcal{S}}'))\\
    =& \bar{\phi}_{le^*}(\phi_{ge_1}(x_{\mathcal{S}}\cdot x_{\mathcal{S}}'), \dots, \phi_{ge_{|In(t)|}}(x_{\mathcal{S}}\cdot x_{\mathcal{S}}'))\\
    =& \phi_{ge^*}(x_{\mathcal{S}}\cdot x_{\mathcal{S}}').
\end{aligned}
\end{equation*}
On the other hand,
\begin{equation*}
\begin{aligned}
    & \bar{\phi}_{le^*}(x_{e_1}, \dots, x_{e_{|\In(u)|}})\cdot \bar{\phi}_{le^*}(x_{e_1}', \dots, x_{e_{|\In(u)|}}')\\
    =& \phi_{ge^*}(x_{\mathcal{S}})\cdot \phi_{ge^*}(x_{\mathcal{S}}').
\end{aligned}
\end{equation*}

Since $\phi_{ge^*}$ is CWL, $ \phi_{ge^*}(x_{\mathcal{S}}\cdot x_{\mathcal{S}}') = \phi_{ge^*}(x_{\mathcal{S}})\circ \phi_{ge^*}(x_{\mathcal{S}}')$, such that for any $x_{\In(u)}, x_{\In(u)}'\in \bar{\mathcal{X}}_{\In(u)}\subseteq  \mathcal{X}_{\In(u)}$, we have
\begin{equation*}
    \bar{\phi}_{le^*}\left( x_{\In(u)}\right) \circ \bar{\phi}_{le^*}\left(x'_{\In(u)}\right) = \bar{\phi}_{le^*}\left(x_{\In(u)}\cdot x'_{\In(u)}\right)
\end{equation*}
implying $\bar{\phi}_{le^*}$ is a group homomorphism from $\bar{\mathcal{X}}_{\In(u)}$ to $\mathcal{X}_{e^*}$.

Combining the facts that $\mathcal{X}_{\In(u)}$, $\mathcal{X}_{e^*}$ are Abelian groups and $\bar{\mathcal{X}}_{\In(u)}$ is a subgroup of $\mathcal{X}_{\In(u)}$, by Lemma~\ref{lemma:CWL2}, there exists a CWL function $\phi_{le^*}:\mathcal{X}_{\In(u)}\mapsto \mathcal{X}_{e^*}$ such that $\phi_{le^*}(x_{\In(u)})=\bar{\phi}_{le^*}(x_{\In(u)})$ for $x_{\In(u)}\in \bar{\mathcal{X}}_{\In(u)}$.
\end{proof}


We note that Theorem~\ref{theorem:CWL6} holds with respect to general CWL functions however Theorem~\ref{theorem:CWL7} only holds with respect to Abelian CWL functions. 
The challenge in proving Theorem~\ref{theorem:CWL7} for general CWL functions lies in extending Lemma~\ref{lemma:CWL2} to the case in which $\mathcal{X}$ is a non-Abelian group. 
In other words, given a partial function from $\mathcal{A}$ to $\mathcal{B}$ which is a group homomorphism $\phi': \mathcal{A}'\mapsto \mathcal{B}$, where $\mathcal{A}'\subset \mathcal{A}$, there may not exist a total function $\phi: \mathcal{A}\mapsto \mathcal{B}$ where $\phi$ is a group homomorphism. 

Combining Theorems~\ref{theorem:group2}, \ref{theorem:groupCWL4}, \ref{theorem:groupCWL5}, \ref{theorem:CWL6}, and \ref{theorem:CWL7} we conclude the following corollary. 
We remark that the reductions in Theorems~\ref{theorem:group2}, \ref{theorem:groupCWL5}, and \ref{theorem:CWL6} that were proven for general groups, preserve the Abelian group structures when used with Abelian groups.


\begin{corollary} \label{col:only}
Let $\mathcal{I}$ be a network coding instance and $\{X_f :f \in \mathcal{S} \cup \mathcal{E}\}$ be a network code on $\mathcal{I}$. 
$\{X_f :f \in \mathcal{S} \cup \mathcal{E}\}$ can be obtained by an Abelian group network code if and only if $\{X_f :f \in \mathcal{S} \cup \mathcal{E}\}$ can be obtained through local encoding functions which are Abelian CWL and defined in a consistent manner.
\end{corollary}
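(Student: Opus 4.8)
The plan is to establish both directions of the equivalence by chaining the five cited theorems, while carefully tracking the consistency condition of Definition~\ref{def:cwlConsis} and the Abelian hypothesis at each step. Neither direction requires new machinery; the content of the corollary is that the reductions compose cleanly.

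For the forward direction (an Abelian group network code yields consistent local Abelian CWL functions), I would first invoke Theorem~\ref{theorem:group2} to pass from the group network code to a consistent group characterization of the global encoding functions on every edge; since the code is Abelian, this characterization uses a common Abelian ambient group $G$ together with shared subgroups $\{G_f\}$. I would then apply Theorem~\ref{theorem:groupCWL4} edge-by-edge to conclude that each global encoding function $\phi_{ge}$ is Abelian CWL. The key observation is that the CWL groups produced by that theorem are exactly the quotients $H_f = G/G_f$, which depend only on $G$ and $G_f$ and not on the particular edge; because the group characterization is consistent, the same $H_f$ is attached to every occurrence of a random variable $X_f$, so the resulting collection of global CWL functions is itself consistent in the sense of Definition~\ref{def:cwlConsis}. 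A final application of Theorem~\ref{theorem:CWL7} then converts these consistent global Abelian CWL functions into consistent local Abelian CWL functions.

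For the reverse direction (consistent local Abelian CWL functions yield an Abelian group network code), I would reverse the chain. Theorem~\ref{theorem:CWL6} first lifts the consistent local CWL functions to consistent global CWL functions; as noted in the remark preceding the corollary, this reduction preserves the Abelian structure, so the global functions remain Abelian CWL. I would then apply Theorem~\ref{theorem:groupCWL5} to each edge to obtain a group characterization of each global encoding function and argue, mirroring the forward direction, that consistency of the global CWL groups $\{H_f\}$ transfers to a consistent Abelian group characterization sharing a common ambient group and subgroups. Theorem~\ref{theorem:group2} then certifies that the code is an Abelian group network code.

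The main obstacle is not any individual implication, each of which is already proven, but the bookkeeping that guarantees consistency is neither created nor destroyed in moving between the two representations. The per-edge statements of Theorems~\ref{theorem:groupCWL4} and~\ref{theorem:groupCWL5} must be upgraded to statements about the entire collection of encoding functions, and the crux is the canonical correspondence $H_f = G/G_f$ between the CWL groups and the quotient groups of the group characterization: because this correspondence is fixed and uses the common group $G$, a consistent characterization on the group side maps to a consistent family on the CWL side, and conversely. I would also verify at each step that the Abelian hypothesis is invoked only where Theorems~\ref{theorem:groupCWL4} and~\ref{theorem:CWL7} genuinely require it, consistent with the fact that the remaining three reductions hold for general groups.
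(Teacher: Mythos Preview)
Your proposal is correct and follows essentially the same approach as the paper, which simply states that the corollary is obtained by combining Theorems~\ref{theorem:group2}, \ref{theorem:groupCWL4}, \ref{theorem:groupCWL5}, \ref{theorem:CWL6}, and \ref{theorem:CWL7} together with the remark that the general reductions preserve Abelian structure. Your write-up is in fact more explicit than the paper about the one nontrivial point, namely that the per-edge statements of Theorems~\ref{theorem:groupCWL4} and~\ref{theorem:groupCWL5} yield \emph{consistent} families because the CWL groups are canonically the quotients $G/G_f$ (and, conversely, the ambient group is canonically $\prod_{i\in\mathcal{S}} H_i$ with $G_e=\ker(\phi_{ge})$), so the same group is attached to every occurrence of a given random variable.
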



\section{Conclusion}\label{sec:conclusion}

In this work, we pursue an operational definition of group network codes. 
Through the study of CWL functions we show an equivalence between Abelian group network codes and codes obtained through local encoding functions which are Abelian CWL.
A number of questions are left open in this work. 
Primarily, the potential characterization of group network codes (in the non-Abelian case) through local CWL functions is left open. 
Leaving open the question whether CWL functions suffice to achieve the network coding capacity.


\section*{Acknowledgment}

Work supported in part by NSF grants CCF-1817241, CCF-1526771 and CCF-1909451.


\newpage
\appendix

\subsection{Proof of Theorem~\ref{theorem:linear1}}

\begin{proof}

We first assume a family of global functions for $\{X_f: f\in\mathcal{S}\cup\mathcal{E}\}$.
Consider an edge $e\in\mathcal{E}$, for any $e'\in \In(u)\cup e$ the edge message $x_{e'}$ is a linear function of the source message tuple $x_\mathcal{S}$, such that $x_{e'}=x_\mathcal{S} N_{e'}$.
Here $N_{e'}$ is the global encoding matrix on $e'$.
The local encoding function $\phi_{le'}$ on $e'$ is linear, if and only if there exist local encoding matrices $\{M_{e'}:e'\in \In(u)\}$ such that
\begin{equation}\label{equa:linear1}
    x_e= \sum_{e'\in \In(u)} x_{e'}M_{e'}.
\end{equation}
That is, if
\begin{equation}\label{equa:linear2}
    x_\mathcal{S}N_{e}= \sum_{e'\in \In(u)} x_\mathcal{S}N_{e'}M_{e'}
\end{equation}
for any $x_\mathcal{S}\in \mathbb{F}_q^{1\times n}$.

Equation (\ref{equa:linear2}) holds if there exist local encoding matrices $\{M_{e'}:e'\in \In(u)\}$ such that
\begin{equation} \label{equa:linear3}
\begin{aligned}
    N_{e} &= \sum_{e'\in \In(u)} N_{e'}M_{e'}\\ &= 
    \begin{bmatrix}
        N_{e_1}\dots N_{e_{|\In(u)|}}
    \end{bmatrix}
    \begin{bmatrix}
        M_{e_1}\\ \vdots\\ M_{e_{|\In(u)|}}
    \end{bmatrix}.    
\end{aligned}
\end{equation}
We denote $ \begin{bmatrix} N_{e_1}\dots N_{e_{|\In(u)|}} \end{bmatrix}$ as $ N_{\In(u)}$ and $ \begin{bmatrix} N_{e_1}\dots N_{e_{|\In(u)|}}, N_e \end{bmatrix}$ as $ N_{\In(u), e}$.
If we consider the entries of $\{M_{e'}:e'\in \In(u)\}$ as unknowns, then (\ref{equa:linear3}) is a system of linear equations,
which has solution if and only if 
\begin{equation}\label{equa:linear4}
    rank(N_{\In(u), e})=rank(N_{\In(u)}).
\end{equation}

For $\alpha\subseteq \In(u)\cup e$, we denote $\begin{bmatrix} N_{e_1}\dots N_{e_{|\alpha|}} \end{bmatrix}$ as $N_\alpha$. 
Since $X_\mathcal{S}$ is uniformly distributed over $\mathcal{X}_\mathcal{S}=\mathbb{F}_2^{1\times n R_\mathcal{S}}$, where $R_\mathcal{S}=\sum_{i\in\mathcal{S}} R_i$, 
\begin{equation*}
\begin{aligned}
     \Pr(X_\alpha=x_\alpha) =& \sum_{x_\mathcal{S}:x_\alpha=x_\mathcal{S}N_{\alpha}} \Pr(X_\mathcal{S}=x_\mathcal{S})\\
     =& \frac{|\{x_\mathcal{S}\in \mathcal{X}_\mathcal{S}: x_{\alpha}=x_\mathcal{S}N_{\alpha}\}|}{|\mathcal{X}_{\mathcal{S}}|}\\
     =& \frac{2^{nR_\mathcal{S}-rank(N_{\alpha})}}{2^{n R_\mathcal{S}}}\\
     =& 2^{-rank(N_{\alpha})},
\end{aligned}
\end{equation*}
such that $X_{\alpha}$ is uniform and 
\begin{equation}\label{equa:linear5}
    H(X_{\alpha})=rank(N_{\alpha}).
\end{equation}

Because $X_e$ is a function of $X_{\In(u)}$, we have $H(X_e| X_{\In(u)}) = 0$ which implies $H(X_e, X_{\In(u)}) = H(X_{\In(u)})$.
By (\ref{equa:linear5}), we have $rank(N_{\In(u), e})=rank(N_{\In(u)})$.
Thus (\ref{equa:linear4}) is true which concludes the proof of this direction. 
The other direction (from local functions to global ones) is proven by induction. See, e.g. \cite{yeung2008information}.

\end{proof}


\subsection{Proof of Theorem~\ref{theorem:group1}}

\begin{proof}
We show the if-part first.
By the assumption in the theorem, for every $e=(u,v)\in\mathcal{E}$, the set of random variables $\{X_e,X_{\In(u)}\}$ are group characterizable in a consistent manner.
As $\mathcal{S}\cup\mathcal{E}=\cup_{e\in\mathcal{E}} \{\In(u)\cup e\}$, by Lemma~\ref{lemma:groupunionofRV} the set of random variables $\{X_f: f\in\mathcal{S}\cup \mathcal{E}\}$ is group characterizable.
Thus, by Definition~\ref{def:groupNC} the network code is a group network code.

For the only-if part, by Definition~\ref{def:groupNC} all local encoding functions are group characterizable, which concludes the proof.
\end{proof}


\subsection{Proof of Theorem~\ref{theorem:group2}}

\begin{proof}
By the assumption in the theorem, for every $e=(u,v)\in\mathcal{E}$, the set of random variables $\{X_f:f\in \mathcal{S}\cup e\}$ are group characterizable in a consistent manner. 
As $\mathcal{S}\cup\mathcal{E}=\cup_{e\in\mathcal{E}} \{\mathcal{S}\cup e\}$,
the theorem can be proven following the analysis of Theorem~\ref{theorem:group1}.
\end{proof}


\subsection{Proof of Theorem~\ref{theorem:groupCWL5}}

\begin{proof}

By our definitions, to prove that $\phi_{ge}$ is group characterizable, it suffices to show that there exists a finite group $G$ with subgroups $\{G_f: f\in\mathcal{S}\cup e\}$ which forms a group characterization of random variables $\{X_f: f\in \mathcal{S}\cup e\}$.

By the assumption in the theorem, the function $\phi_{ge}$ is CWL. 
Thus there exists finite groups $\{H_f: f\in \mathcal{S}\cup e\}$, where $\mathcal{X}_f= H_f$ for $f\in\mathcal{S}\cup e$ and $\phi_{ge}$ is a group homomorphism from $H_1\times \dots \times H_{|\mathcal{S}|}$ to $H_e$.

We define $G= H_1\times \dots \times H_{|\mathcal{S}|}$, $G_e=\ker(\phi_{ge})$ and $G_i= H_1\times\dots \times \{\mathbf{i}_i\}\times \dots\times H_{|\mathcal{S}|}$ for each $i\in\mathcal{S}$ (replacing $H_i$ by $\mathbf{i}_i$ on the $i$-th coordinate of $G_i$).
By our definitions, for any $g=(h_1,\dots,h_{|\cS|})\in G$ and $i\in\mathcal{S}$, it holds that $g G_i = H_1\times \dots \times \{h_i\} \times \dots \times H_{\cS} = G_i g$ such that $\{G_i: i\in\mathcal{S}\}$ are normal subgroups of $G$.
By the First Isomorphism Theorem \cite{gallian2016contemporary}, as $\phi_{ge}$ is a group homomorphism from $G$ to $H_e$, $\ker(\phi_{ge})$ is a normal subgroup of $G$.
Thus, $G$ is a finite group and $\{G_f: f\in\mathcal{S}\cup e\}$ are normal subgroups of $G$. 
Additionally, as $\{G_f: f\in\mathcal{S}\cup e\}$ are normal, by the Factor Groups Theorem in \cite{gallian2016contemporary}, the set of left cosets $G/G_f=\{g G_f: g\in G\}$ is a group under operation $(a G_f)(b G_f)=ab G_f$ for each $f\in\mathcal{S}\cup e$.

Now, we show that $G/G_f$ is isomorphic to $H_f$ for any $f\in\mathcal{S}\cup e$.
Given any $i\in\mathcal{S}$, one can construct a function $\psi_i: H_i\mapsto G/G_i$, where $\psi_i(h_i)= (\mathbf{i}_1,\dots,h_i,\dots,\mathbf{i}_{|\mathcal{S}|}) G_i$.
As for any $h_i,h_i'\in H_i$, we have
\begin{equation*}
    \psi_i(h_i\cdot h_i')=\psi_i(h_i)\cdot \psi_i(h_i'),
\end{equation*}
thus $\psi_i$ is a group homomorphism.
According to our definition, we have $\ker(\psi_i)=\{h_i\in H_i: \psi_i(h_i)= G_i\}=\mathbf{i}_i$, thus by the Properties of Subgroups Under Homomorphisms \cite{gallian2016contemporary}, $\psi_i$ is an isomorphism from $H_i$ to $G/G_i$ which implies that $H_i$ and $G/G_i$ are isomorphic.
We now show that $G/G_e$ is isomorphic to $H_e$.
One can construct a function $\psi_e: G/G_e\mapsto H_e$, where $\psi_e(g G_e)=\phi_{ge}(g)=h_e$ for any $g\in G$.
As for any $g G_e, g'G_e\in G/G_e$, we have
\begin{equation*}
    \begin{aligned}
        \psi_e(g G_e\cdot g'G_e) =& \phi_{ge}(g\cdot g')=\phi_{ge}(g)\cdot \phi_{ge}(g')\\ =&\psi_e(g G_e)\cdot \psi_e(g'G_e),
    \end{aligned}
\end{equation*} 
thus $\psi_e$ is a group homomorphism, and $\ker(\psi_e)=\ker(\phi_{ge})=G_e$ which is the identity element in $G/G_e$.
Similar to the argument above, $\psi_e$ is an isomorphism from $G/G_e$ to $H_e$.

Next, we show that the group $G$ with subgroups $\{G_f: f\in\mathcal{S}\cup e\}$ forms a group characterization of $\{X_f: f\in\mathcal{S}\cup e\}$.
More specifically, we present a collection of random variables $\{Y_f: f\in\mathcal{S}\cup e\}$ that are identically distributed to $\{X_f: f\in\mathcal{S}\cup e\}$ which are group characterized by $G$ with subgroups $\{G_f: f\in\mathcal{S}\cup e\}$.
The random variables $\{Y_f: f\in\mathcal{S}\cup e\}$ are defined by $(X_1, \dots, X_{|\mathcal{S}|})$ which are distributed uniformly over $H_1 \times \dots \times H_{|\mathcal{S}|} = G$. 
Namely, for $f\in\mathcal{S}\cup e$, let $Y_f=(X_1,\dots,X_{|\mathcal{S}|})G_f$.
Using the definition above, $(X_1,\dots,X_{|\mathcal{S}|}, X_e) = (h_1,\dots,h_{|\mathcal{S}|},h_e)$ if and only if for all $f \in \mathcal{S}\cup e$, $Y_f = (h_1,\dots,h_{|\mathcal{S}|}) G_f$. 
More specifically, it holds for all $i \in \mathcal{S}$ that $Y_i=\psi_f(X_i)$ and $X_e=\psi_e(Y_e)$, where $\psi_f$ ($f\in\mathcal{S}\cup e$) is the isomorphism discussed above.
Thus, $\{Y_f: f\in\mathcal{S}\cup e\}$ are identically distributed to $\{X_f: f\in\mathcal{S}\cup e\}$ and in addition, for the uniform element $g_r=(X_1,\dots,X_{|\mathcal{S}|})$ in $G$, for all $f \in \mathcal{S}\cup e$, $Y_f = g_r G_f$. 
This concludes the proof.

\end{proof}


\subsection{Proof of Theorem~\ref{theorem:CWL6}}

\begin{proof}

We start by noticing that if the encoding functions on all edges directly connected to sources are locally CWL, then they are also globally CWL. 

For any edge $e^*=(u,v)\in\mathcal{E}$, assuming by induction that global functions $\{\phi_{ge}: e\in In(u)\}$ and local function $\phi_{le^*}$ are CWL with a set of groups $\{G_f: f\in\mathcal{S}\cup \In(u)\cup e^*\}$.
Let $\phi_{ge^*} (x_{\mathcal{S}}) = \phi_{le^*}(\phi_{e_1}(x_{\mathcal{S}}), \dots, \phi_{e_{|In(e^*)|}}(x_{\mathcal{S}}))$.
Let $x_{\mathcal{S}}\neq x_{\mathcal{S}}'$ be source message tuples.
Define $x_{\mathcal{S}}\cdot x_{\mathcal{S}}'=(x_1\cdot x_1',\dots,x_{|\mathcal{S}|}\cdot x_{|\mathcal{S}|}')$.
By our definitions, 
\begin{equation*}
\begin{aligned}
    &\phi_{ge^*}(x_{\mathcal{S}}\cdot x_{\mathcal{S}}')\\ 
    =& \phi_{le^*}\left(\phi_{e_1}(x_{\mathcal{S}}\cdot x_{\mathcal{S}}'),\dots,\phi_{e_{|In(e^*)|}}(x_{\mathcal{S}}\cdot x_{\mathcal{S}}')\right)\\ 
    =& \phi_{le^*} (\phi_{e_1}(x_{\mathcal{S}})\cdot \phi_{e_1}(x_{\mathcal{S}}'), \dots,\\ &\phi_{e_{|In(e^*)|}}(x_{\mathcal{S}}) \cdot \phi_{e_{|In(e^*)|}} (x_{\mathcal{S}}') )\\ 
    =& \phi_{le^*}(\phi_{e_1}(x_{\mathcal{S}}),\dots,\phi_{e_{|In(e^*)|}}(x_{\mathcal{S}}))\\ &\cdot \phi_{le^*}(\phi_{e_1}(x_{\mathcal{S}}'),\dots,\phi_{e_{|In(e^*)|}}(x_{\mathcal{S}}'))\\ 
    =& \phi_{ge^*}(x_{\mathcal{S}})\cdot \phi_{ge^*}(x_{\mathcal{S}}').
\end{aligned}
\end{equation*}
Thus, $\phi_{ge^*}$ is CWL.
Continuing inductively, since $\mathcal{I}$ is a directed acyclic network instance, the global encoding function on any edge of the network is CWL.

\end{proof}


\subsection{Proof of Lemma~\ref{lemma:CWL2}}

\begin{proof}

Given a group homomorphism $\bar{\phi}:\bar{\mathcal{X}}\mapsto \mathcal{Y}$, the kernel $\ker(\bar{\phi})$ is a normal subgroup of $\bar{\mathcal{X}}$. 
By the Fundamental Theorem of Abelian groups, given a subgroup $\bar{\mathcal{X}}\subseteq \mathcal{X}$, there exists a subgroup $K\subseteq \mathcal{X}$ such that $\mathcal{X}$ equals to $K\cdot \bar{\mathcal{X}}=\{k\cdot \bar{x} : k\in K, \bar{x}\in \bar{\mathcal{X}}\}$.

Now, we define $\phi: \mathcal{X}\mapsto \mathcal{Y}$.
Since $\mathcal{X}=K\cdot \bar{\mathcal{X}}$, for any element $x\in\mathcal{X}$, there exists $k_x\in K$ and $\bar{x}\in \bar{\mathcal{X}}$ such that $x=k_x\cdot \bar{x}$.
We define $\phi(x)=\phi(k_x\cdot \bar{x})=\bar{\phi}(\bar{x})$.
Similarly, for any $x'\in \mathcal{X}$, let $x'=k_x'\cdot \bar{x}'$.
By our definitions, we have
\begin{equation*}
\begin{aligned}
    &\phi(x)\circ\phi(x')\\ =& \phi(k_x\cdot \bar{x})\circ \phi(k_x'\cdot \bar{x}')=\bar{\phi}(\bar{x})\circ \bar{\phi}(\bar{x}')\\=&\bar{\phi}(\bar{x}\cdot \bar{x}')=\phi(k_x\cdot \bar{x}\cdot k_x'\cdot \bar{x}')\\ =&\phi(x\cdot x')
\end{aligned}
\end{equation*}
which implies that $\phi$ is a group homomorphism from $\mathcal{X}$ to $\mathcal{Y}$.

\end{proof}
\bibliographystyle{IEEEtran}
\bibliography{bibliography}
\end{document}